\def\author#1{\gdef\autrun{\def\and{\unskip, }#1}\gdef\@author{#1}}
 \newtheorem{theorem}{\bf Theorem}
 \newtheorem{lemma}[theorem]{\bf Lemma}
\newtheorem{example}[theorem]{\bf Example}
\newcommand{\one}[0]{\mathds{1}}
\def\blue\color{blue}
\def\blue{\color{blue}}
\title{Classical and quantum cyclic redundancy check codes}
\begin{document}

\baselineskip=17pt

\author{Simeon Ball and Ricard Vilar \thanks{4 February 2025. The authors acknowledge the support of PID2020-113082GB-I00 and PID2023-147202NB-I00 financed by MICINN / AEI / 10.13039/501100011033, the Spanish Ministry of Science and Innovation.} \thanks{
Keywords: cyclic redundancy check codes, stabilizer codes, quantum cyclic redundancy check codes, syndrome decoding, linear time decoding.
MSC2020: 94B20, 94B15, 94B35. }}

\maketitle

\begin{abstract}
We prove that certain classical cyclic redundancy check codes can be used for classical error correction and not just classical error detection. We extend the idea of classical cyclic redundancy check codes to quantum cyclic redundancy check codes. This allows us to construct quantum stabilizer codes which can correct burst errors where the burst length attains the quantum Reiger bound. We then consider a certain family of quantum cyclic redundancy check codes for which we present a fast linear time decoding algorithm.
\end{abstract}

\section{Introduction} 

In many classical channels, including communication channels and storage mediums, errors tend to come in bursts, i.e. located in a short interval. There are classical codes specifically designed to deal with these types of errors. Cyclic redundancy check (CRC) codes are a class of commonly employed codes which are used to detect burst errors. These codes are cyclic codes and have the ability to detect cyclic burst errors of maximum length set out by the Reiger bound. In this article we will also show that, if chosen carefully, there are CRC codes which can also {\em correct} burst errors of maximum length set out by the Reiger bound. Analogously, in the quantum world, quantum errors can also be correlated in space and time \cite{CGLM2014} and quantum channels can introduce errors which are localised, i.e. quantum burst errors. Another objective of this article is to construct quantum CRC codes. We show that there are quantum CRC codes which meet the quantum Reiger bound. The quantum Reiger bound was proven in \cite[Theorem 3]{FHCCL2018}, where some constructions of codes meeting the bound were given. 

The {\em weight} of a vector of ${\mathbb F}_q^n$ is the number of non-zero coordinates that it has. A linear $[n,k]$ code is a $k$-dimensional subspace of ${\mathbb F}_q^n$, where ${\mathbb F}_q$ denotes the finite field with $q$ elements. We say that $C$ can correct $t$ errors if we can identify $e$, given any $u+e$, where $e$ is any vector of ${\mathbb F}_q^n$ of weight at most $t$ and $u \in C$.

\begin{theorem}[Singleton Bound \cite{Singleton1964}] If $C$ is a linear $[n,k]$ code that can correct up to $t$ errors then $n-k \geq 2t$.
\end{theorem}

We say that $C$ can correct burst errors of length $t$ if we can identify $e$, given any $u+e$, where $u \in C$ and $e$ is any vector of ${\mathbb F}_q^n$ whose non-zero coordinates lie in a burst $\{i,\ldots,i+t-1\}$, where the coordinate positions are read modulo $n$. If $t$ is minimal with this property then we say that $e$ has {\em burst weight} $t$.
Note that we could restrict to non-cyclic bursts where the coordinates are not read modulo $n$ but this would be more restrictive, so the codes we will construct will always be able to correct cyclic bursts as well. We also choose the modulo in the set $\{1,\ldots,n\}$ instead of $\{0,\ldots,n-1\}$, since we are referring to coordinates. 

The proof of the Singleton bound holds if we restrict the possible error bits to any $2t$ coordinates. Thus, the same proof gives the Reiger bound.

\begin{theorem}[Reiger Bound \cite{Reiger1960}] If $C$ is a linear $[n,k]$ code that can correct a burst error of up to length $\ell$ then $n-k \geq 2\ell$.
\end{theorem}

In this article we will also be concerned with the quantum analogues of these bounds, in particular the quantum Reiger bound. To be able to state this bound we first describe quantum errors and quantum error-correction.

\section{Quantum Error Correction} \label{qecsec}

An $(\!(n,K)\!)$ quantum error correcting code $\mathcal Q$ is a subspace of dimension $K$ in $({\mathbb C^D})^{\otimes n}$. A quantum error correcting code $\mathcal Q$ can correct arbitrary errors from a set $\mathcal E$ of unitary operators if
\begin{equation} \label{eccond}
\bra{\psi_i}E^{\dagger}E'\ket{\psi_j} = a(E,E')\delta_{ij}
\end{equation}
for all $\bra{\psi_i}\ket{\psi_j} = \delta_{ij}$ and for all $E, E' \in \mathcal E$, where $\ket{\psi_i}$ and $\ket{\psi_j} \in  \mathcal Q$, and $a(E,E')$ is a constant which depends only on $E$ and $E'$, see \cite[Theorem 10.1]{NC2000} or \cite{BrunLidar2013}. 

If condition (\ref{eccond}) holds then we can correct any linear combination of the operators in $\mathcal E$, see \cite[Theorem 10.2]{NC2000}. This observation is due to Gottesman \cite{Gottesman2009}.

We will restrict our quantum error correcting code to stabilizer codes over finite fields of prime order. Thus $\mathcal Q$ will be a subspace of $({\mathbb C}^p)^{\otimes n}$, where $p$ is a prime.

We label the coordinates of ${\mathbb C}^p$ with elements of ${\mathbb F}_p$, where ${\mathbb F}_p$ denotes the finite field with $p$ elements. In this way, a basis for the space of endomorphisms of ${\mathbb C}^p$ can be indexed by the elements of ${\mathbb F}_p \times {\mathbb F}_p$.

For each $a \in {\mathbb F}_p$, we define $X(a)$, an endomorphism of ${\mathbb C}^p$, defined as the linear map which permutes the coordinates of ${\mathbb C}^p$ by adding $a$ to the index.

In other words, the effect on the elements of the basis $\{ \ket x \ | \ x \in {\mathbb F}_p\}$ of ${\mathbb C}^p$ is
$$
X(a)\ket x=\ket{x+a}.
$$

For each $b \in {\mathbb F}_p$, we define $Z(b)$, an endomorphism of ${\mathbb C}^p$, to be the diagonal matrix whose $i$-th diagonal entry is $w^{ib}$.  Here,
$w=e^{2\pi i/p}$ is a primitive $p$-th root of unity. Thus,
$$
Z(b)\ket x=\omega^{xb}\ket x.
$$

We define the Pauli group for $p$ odd as
$$
\mathcal{P}_1=\{ \omega^cX(a)Z(b) \ | \ a,b \in {\mathbb F}_p,\ c \in {\mathbb Z}/p{\mathbb Z}\}
$$
and for $p=2$, as
$$
\mathcal{P}_1=\{ i^f \omega^c X(a)Z(b) \ | \ a,b \in {\mathbb F}_p,\ c \in {\mathbb Z}/2{\mathbb Z},\ f \in {\mathbb Z}/2{\mathbb Z}\}.
$$

More generally, we define the group of Pauli operators on $({\mathbb C}^p)^{\otimes n}$ to be
$$
\mathcal{P}_n=\{\ \sigma_1\otimes \cdots \otimes \sigma_n \ | \ \sigma_j \in \mathcal{P}_1\}.
$$
The size of $\mathcal{P}_n$ is $p^{2n+1}$ for $p$ odd and $2^{2n+2}$ for $p=2$.

The weight of an element $c\sigma_1\otimes \cdots \otimes \sigma_n $, where $\sigma_i=X(a_i)Z(b_i)$, is the number of $i \in \{1,\ldots,n\}$ such that $\sigma_i \neq X(0)Z(0)$.
Note that $X(0)Z(0)=\one$ is the identity map.

We say a code $\mathcal Q$ is a quantum $t$-error correcting code of $({\mathbb C^p})^{\otimes n}$ if condition (\ref{eccond}) holds when $\mathcal E$ is taken to be all Pauli operators of weight at most $t$.

The following bound is due to Knill and Laflamme \cite{KL1997}.

\begin{theorem}[Quantum Singleton Bound] If $Q$ is a $t$-quantum error correcting code of $({\mathbb C^p})^{\otimes n}$ and dimension $p^k$ then $n-k \geq 4t$.
\end{theorem}

The burst weight $\mathrm{bwt}(E)$ of $E=\sigma_1 \otimes \cdots \otimes \sigma_n$ is $\ell$, where $\ell$ is minimum such that all non-identity matrices $\sigma_j$ are contained in a continuous sequence of length $\ell$. In other words, there is an $r$ such that if
$\sigma_j\neq \mathrm{id}$ then  $j \in \{r,\ldots,r+\ell-1\}$. As before, the index $j$ is read modulo $n$.

The code $\mathcal Q$ is a quantum $\ell$-burst error correcting code if condition (\ref{eccond}) holds whenever $\mathcal E$ is taken to be all Pauli operators of burst weight at most $\ell$. 

The following bound is from Fan et al. \cite{FHCCL2018}.

\begin{theorem}[Quantum Reiger Bound] If there is a quantum $\ell$-burst error correcting code of $({\mathbb C^p})^{\otimes n}$ and dimension $p^k$ then
	$n-k \geq 4\ell$.
\end{theorem}

In this article, we will provide examples of families of quantum $\ell$-burst error correcting code which attain the Reiger bound and provide a linear time decoding algorithm for a particular family.

\section{Stabilizer Codes and Additive Codes}

We define the map $\tau$ from the Pauli operators to ${\mathbb F}_p^{2n}$ as 
$$
\tau(M)=(a_1,\ldots,a_n \ | \ b_1 , \ldots,b_n) \in {\mathbb F}_p^{2n}.
$$
where 
$$
M=c X(a_1)Z(b_1) \otimes X(a_2)Z(b_2)\otimes \cdots \otimes X(a_n)Z(b_n).
$$

The symplectic weight $\mathrm{swt}(v)$ of a vector $v \in {\mathbb F}_p^{2n}$ is defined as
$$
|\{ i \in \{1,\ldots,n\} \ | \ (v_i,v_{i+n}) \neq (0,0) \}|.
$$
It is immediate from the definition of $\mathrm{swt}(v)$ and $\tau$ that $\mathrm{wt}(E)=\mathrm{swt}(\tau(E))$.

The burst symplectic weight $\mathrm{bswt}(v)$ of a vector $v \in {\mathbb F}_p^{2n}$ is defined as the minimum $t$ such that there is an $r$ with the property that $(v_i,v_{i+n})= (0,0)$ for all $i \not \in \{r,\ldots,r+t-1 \}$, where indices are read modulo $n$. Again, it is clear that $\mathrm{bwt}(E)=\mathrm{bswt}(\tau(E))$.

We define an alternating form on ${\mathbb F}_p^{2n}$, where for $(a|b),(a'|b') \in {\mathbb F}_p^{2n}$,
\begin{equation} \label{symform}
((a|b),(a'|b'))_s= a\cdot b'-a'\cdot b.
\end{equation}

It is straightforward to prove that $M=X(a)Z(b)$ and $M'=X(a')Z(b')$ commute if and only if
$$
((a|b),(a'|b'))_s=(\tau(M),\tau(M'))_s=0
$$ 
and more generally that
\begin{equation} \label{commrel}
MM'=w^{(\tau(M),\tau(M'))_s} M'M.
\end{equation}

Let $S$ be a subgroup of $\mathcal{P}_n$ not containing $-\one$ and let $C=\tau(S)$. Then $C$ is an additive code of ${\mathbb F}^{2n}_p$, since $\tau(MN)=\tau(M)+\tau(N)$.

Note that $\tau$ is not a bijection since $\tau(cE)=\tau(E)$. However, we understand by $\tau^{-1}(C)$ any abelian subgroup $S$ for which $-\one \not\in S$ and $\tau(S)=C$.

The {\em centralizer} of $S$ in $\mathcal P_n$ is
$$
\mathrm{Centralizer}(S)=\{ E \in \mathcal P_n \ | \ EME^{-1}=M, \ \mathrm{for} \ \mathrm{all} \ M\in S \}.
$$
Since $S$ is abelian, the centralizer clearly contains
$$
[S]=\{ cE \ | \ E \in S, \ c^4=1\ (p=2), c^p=1,\ (p \neq 2) \}.
$$

The symplectic dual
$$
C^{\perp_s}=\{ (a',b') \in {\mathbb F}_p^{2n} \ | \ ((a|b),(a'|b'))_s=0, \mathrm{for} \ \mathrm{all} \ (a|b) \in C \},
$$
is the image of the centralizer of $S$ under the map $\tau$.  

Thus, applying the map $\tau$ we have that $S \subset \mathrm{Centraliser(S)}$ implies $C \subseteq C^{\perp_s}$.

The {\em stabilizer code} associated with $S$ is
$$
\mathcal Q(S)=\{ \ket{\psi} \in \mathcal H \ | \ M \ket{\psi}= \ket{\psi}, \ \mathrm{for} \ \mathrm{all} \ M \in S \}.
$$
All quantum error-correcting codes considered in this article will be stabilizer codes.

The following lemma is \cite[Proposition 10.5]{NC2000}.
\begin{lemma}
The dimension of $\mathcal Q(S)$ is $p^n/|S|$.
\end{lemma}

 If $\dim \mathcal Q(S)=p^k$ then we denote this as an $[\![n,k]\!]$ code.

Condition (\ref{eccond}) implies the following theorem for stabilizer codes, \cite[Theorem 10.8]{NC2000}.

\begin{theorem} If $\mathcal E$ is a subset of $\mathcal P_n$ such that  for all $E_i, E_j \in \mathcal E$ we have that $E_i^{\dagger} E_j \not \in \mathrm{Centraliser}(S)\setminus [S]$ then $\mathcal Q(S)$ can correct errors in $\mathcal E$.
\end{theorem}

If $Q(S)$ is unable to correct $E_1^{\dagger}E_2$ then $E_1^{\dagger}E_2 \in \mathrm{Centraliser}(S) \setminus [S]$. Therefore, $\tau(E_1^{\dagger}E_2)=\tau(E_1)+\tau(E_2)=v_1+v_2 \in C^{\perp_s} \setminus C$.

Thus, we have the following statement. If  for any vectors $v_1,v_2 \in {\mathbb F}_p^{2n}$ of symplectic weight  (respectively burst symplectic weight)  at most $t$, the vector $v_1+v_2 \not\in C^{\perp_s} \setminus C$ then $Q(S)$ is a quantum $t$-error correcting code (respectively quantum $t$-burst error correcting code).

We can construct a generator matrix $\mathrm{G}(S)$ for $C=\tau(S)$ by taking the $(n-k) \times 2n$ matrix 
whose $i$-th row is $\tau(M_i)$, where $M_1,\ldots,M_{n-k}$ is a set of generators for the abelian subgroup $S$.

Note that for all rows $u$ and $w$ of $G$, equation (\ref{symform}) implies
$$
\sum_{j=0}^{n} (u_{j}w_{j+n}-w_{j}u_{j+n})=0.
$$

We will make use of the following particular construction of stabilizer codes. For any matrix $H$, we denote by $H_{\ell}$ the matrix $H$ in which all columns have been shifted cyclically $\ell$ positions to the right. Thus, we have $H_0=H$ and $H_{-\ell}$ is the matrix obtained from $H$ in which all columns have been shifted cyclically $\ell$ positions to the left.

\begin{theorem}\label{quantummatrix}
Let $H$ be a $(n-k) \times n$ check matrix of a classical $[n,k]$ code over ${\mathbb F}_p$ and let $L$ be a subset of $\{1,\ldots,\lfloor (n-1)/2 \rfloor\}$. 
The $(n-k) \times 2n$ matrix 
$$
G=(H| \sum_{\j \in L} H_{+j} + H_{-j})
$$
is the generator matrix of a linear code $C=\tau(S)$, for some abelian subgroup $S$ of $\mathcal{P}_n$ of size $p^{n-k}$ defining an $[\![n,k]\!]$ stabilizer code $\mathcal Q(S)$.
\end{theorem}

\begin{proof}

By the above, it suffices to verify the condition
$$
\sum_{i=0}^{n}( u_i  v_{i+n} - v_i u_{i+n})=0,
$$
for any two rows $u$ and $v$ of $G$.

By construction, we see that $u_{i+n}=\sum_{j \in L} u_{i-j}+u_{i+j}$, where indices are read modulo $n$. Therefore,
$$
\sum_{i=0}^{n}( u_i  v_{i+n} - v_i u_{i+n})=\sum_{i=0}^{n}\sum_{j \in L} u_i  (v_{i-j}+v_{i+j}) - v_i (u_{i-j}+u_{i+j})
$$
$$
=\sum_{j \in L} \sum_{i=0}^{n} u_i  (v_{i-j}+v_{i+j}) - \sum_{j \in L}\sum_{i=0}^{n}
 (v_{i+j}u_{i}+v_{i-j}u_{i})=0.$$

\end{proof}

\section{Syndrome of a stabilizer code}
The syndrome of a (received) vector $v$ with respect to a classical linear code $C$, defined by a check matrix $H$, is 
$$
s(v)=Hv^t.
$$
 In syndrome decoding we aim to deduce the error vector $e=v-u$, where $u\in C$ is the nearest neighbour to $v$, by determining a low weight vector $e$ with the property that $s(e)=s(v)$. 
 
 In quantum error correction, we define a syndrome $s(E)  \in {\mathbb F}_p^{n-k}$ of a Pauli operator $E$ in the following way. For each generator $M_i$ of the abelian subgroup $S$, we obtain $s_i \in {\mathbb F}_p$ depending on whether
$$
M_iE\ket{\psi}=w^{s_i} EM_i \ket{\psi}=w^{s_i}E\ket{\psi}.
$$
Thus, the projective measurement with respect to the Hermitian operators $M_i$ gives us the syndrome $s(E) \in {\mathbb F}_p^{n-k}$, from which we aim to be able to identify the error $E$.

Suppose that $C=\tau(S)$ has generator matrix $G=(A|B)$, where the $i$-th row of $G$ is $\tau(M_i)$, for some set of generators $M_1,\ldots,M_{n-k}$ of $S$. 

Let $E$ be a Pauli operator and let $\tau(E)=(e_1 | e_2) \in {\mathbb F}_p^{2n}$. 

By (\ref{commrel}), if
 $$
 M_iE=w^{s_i}EM_i, 
 $$
then $s_i=(\tau(M_i),\tau(E) )_s$,
 so the syndrome of $E$ is given by
 \begin{equation} \label{syndreq}
s(E)=Ae_2^t-Be_1^t,
\end{equation}
since $s_i$ is the $i$-th coordinate of $s(E)$.

\section{Cyclic redundancy check (CRC) codes}
A cyclic redundancy check (CRC) code, defined in \cite{CRC}, is an error-detecting code commonly used in digital networks and storage devices to detect accidental changes to digital data.
In this section we will prove that certain classical CRC codes can also be used to correct errors.

For any vector $v=(v_1,\ldots,v_n)$, we associate a polynomial which we write as
$$
v(X)=\sum_{i=0}^n v_iX^{i-1}
$$
and vice-versa.

First we recall how to construct the check matrix of a classical CRC code. Let $g \in {\mathbb F}_p[X]$ be a polynomial of degree $n-k$ with $g(0) \neq 0$ and define polynomials of degree less than $n-k$ by

\begin{equation} \label{reqn}
r_i(X)=X^{i-1+n-k} \pmod{g},
\end{equation}

for $i=1,\ldots,k$. Identifying the coefficients of $r_i$ with a column vector of length $n-k$, define the $(n-k) \times n$ matrix

\begin{equation} \label{pmatrix}
\mathrm H=	\begin{pmatrix}
	\begin{array}{cccc|cccc}

		1 & 0 & \cdots & 0 & \vdots & \vdots & \cdots & \vdots\\
		0 & 1 & \cdots & 0 & r_1 & r_2 & \cdots & r_k\\
		\vdots & \vdots & \ddots & \vdots & \vdots & \vdots & \ddots & \vdots\\
		0 & 0 & \cdots & 1 & 
		\vdots & \vdots & \vdots & \vdots  \\
	\end{array}
	\end{pmatrix}.
\end{equation}

Let CRC(g) be the binary linear code with check matrix H. The following lemmas are well-known.
\begin{lemma} \label{classicalsyn} 
    The syndrome $s(e)=eH^t$ of CRC($g$) is $v$ where $v(X)=e(X) \ mod \ g$.
    
\end{lemma}

\begin{proof}
    The syndrome of the error vector $e=(e_1,\ldots,e_n)$ is $eH^t$ whose $j$-th
coordinate is
$$
(eH^t)_j=e_j+\sum_{i=1}^k r_{ij}e_{n-k+i}.
$$
which as a polynomial is
$$
\sum_{j=0}^{n-k} (eH^t)_j X^{j-1}=\sum_{j=0}^{n-k}e_jX^{j-1}+\sum_{j=0}^{n-k} \sum_{i=1}^k r_{ij}e_{n-k+i}X^{j-1}
=\sum_{j=0}^{n-k}e_jX^{j-1}+\sum_{i=1}^k e_{n-k+i} r_i(X)
$$
which is equal to $e(X)$ modulo $g$, by (\ref{reqn}).
\end{proof}

For a polynomial 
$$
f(X)=\sum_{j=1}^{n} f_j X^{j-1} \in \mathbb{F}_p[X]/(X^n-1)
$$ 
we define the (cyclic) burst weight of the polynomial $f(x)$ to be the burst weight of the vector $(f_1,..., f_n)$, as defined before.

\begin{lemma} \label{non-zero}
    If $g$ divides $X^n-1$ then a non-zero multiple of $g$ in $\mathbb{F}_p[X]/(X^n-1)$ has burst length at least $\deg(g)+1$.
\end{lemma}

\begin{proof}
There is no multiple of $g$ which is of smaller degree in the ring ${\mathbb F}_p[X]/(X^n-1)$, since if $$a(X)g(X)+b(X)(X^n-1)=c(X)$$ then $g$ divides $c(X)$. 

Suppose $e(X)$ is a multiple of $g$ whose burst weight is less than the $\deg(g)+1$. Then there exists $m$ such that $X^m e(X)$ has degree less than degree of $g$, and is a multiple of $g$, a contradiction.

\end{proof}

Let $g$ be a polynomial of degree $n-k$ dividing $X^n-1$. 

The following theorem is from \cite{CRC}.

\begin{theorem}
    The code CRC($g$) detects errors of burst weight at most $n-k$. 
    \end{theorem}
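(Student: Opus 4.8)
The plan is to reduce the assertion to the elementary divisibility argument already used in Lemma~\ref{non-zero}, via the polynomial description of the syndrome in Lemma~\ref{classicalsyn}. Recall that a linear code \emph{detects} a nonzero error pattern $e$ exactly when its syndrome is nonzero, i.e.\ when $e$ is not a codeword; so the claim is equivalent to saying that no nonzero vector of cyclic burst length at most $n-k$ lies in $\mathrm{CRC}(g)$. By Lemma~\ref{classicalsyn}, the syndrome of $e=(e_1,\dots,e_n)$ equals $e(X)\bmod g$, where $e(X)=\sum_{j=1}^n e_jX^{j-1}$ has degree at most $n-1$. Hence $e\in\mathrm{CRC}(g)$ if and only if $g\mid e(X)$ in $\mathbb{F}_2[X]$, and it remains to show that a nonzero polynomial of degree at most $n-1$ whose coefficient vector has cyclic burst length at most $n-k$ is never divisible by $g$.

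The main step uses the hypothesis $g\mid X^n-1$. First I would record that $\mathrm{CRC}(g)$ is invariant under cyclic shifts: the cyclic shift of $e$ corresponds to reducing $Xe(X)$ modulo $X^n-1$, and since $g$ divides both $e(X)$ and $X^n-1$ it divides this reduction as well. Now suppose for contradiction that some nonzero $e$ of cyclic burst length $\ell\le n-k$ were a codeword. Its support is contained in $\ell$ cyclically consecutive coordinates; applying a suitable cyclic shift (which keeps $e$ in the code and does not change its cyclic burst length) we may assume these coordinates are $1,\dots,\ell$, so that $e(X)$ is a nonzero polynomial of degree at most $\ell-1\le n-k-1$. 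But $g\mid e(X)$ with $e(X)\neq 0$ forces $\deg e(X)\ge \deg g=n-k$, a contradiction. Hence every nonzero error of cyclic burst length at most $n-k$ has nonzero syndrome and is detected, which is what the theorem asserts.

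The only genuinely delicate point is the wrap-around case: a cyclic burst can straddle the boundary between coordinates $n$ and $1$, and then $e(X)$ itself need not have small degree. This is precisely where $g\mid X^n-1$ is used — it guarantees that $\mathrm{CRC}(g)$ is cyclic, so we may rotate the burst into an initial segment of coordinates and thereby reduce to a polynomial of degree $<\deg g$, where divisibility by $g$ is impossible for a nonzero polynomial. Everything else is the bookkeeping of Lemma~\ref{classicalsyn} and the degree count of Lemma~\ref{non-zero}. (If one only wants to handle non-wrapping bursts one can argue directly: such a burst is $X^i b(X)$ with $\deg b\le n-k-1$, and since $g(0)\neq 0$ we have $\gcd(g,X^i)=1$, so $g\mid X^i b(X)$ would give $g\mid b(X)$, again impossible; the cyclic-rotation argument is the uniform way to also cover the wrap-around bursts.)
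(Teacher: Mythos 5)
Your proof is correct and follows essentially the same route as the paper: Lemma~\ref{classicalsyn} turns a zero syndrome into divisibility by $g$, and the cyclic-shift-plus-degree argument you spell out is exactly the content (and the proof) of Lemma~\ref{non-zero}, which the paper cites at the corresponding step. If anything, inlining the rotation argument serves you well, since your degree count shows a nonzero multiple of $g$ has cyclic burst length at least $n-k+1$, which is what is actually needed to detect bursts of length exactly $n-k$, whereas the paper's literal appeal to Lemma~\ref{non-zero} only yields ``at least $n-k$'' and leaves that boundary case implicit.
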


\begin{proof}
Let $v=u+e$, where $u\in C$ and $e$ is the error vector of burst weight at most $n-k$. 

Note that if the syndrome $s(v)=s(e) \neq 0$ then an error has occurred and is detected. 

If $s(e)=0$ then by Lemma~\ref{classicalsyn}, the polynomial $e(X)$, identified with the error vector $e$, is a non-zero polynomial of degree at most $n-k-1$ which is zero modulo $g$. Thus, $e(X)$ is a multiple of $g$. By Lemma \ref{non-zero}, since $e$ has burst weight at most $n-k$, we conclude $e=0$. 

\end{proof}
 
 We now introduce a new property for polynomials called the c-property. We say that $g$ has the {\em c-property} if no multiple of $g$ in $\mathbb{F}_p[X]/(X^n-1)$ is the sum of two polynomials of (cyclic) burst weight at most  $\lfloor \frac{n-k}{2} \rfloor$. Observe that this is equivalent to choosing $g$ so that the $k$-dimensional cyclic code generated by $g$ contains no codewords which are the sum of two vectors of burst weight at most $\lfloor \frac{n-k}{2} \rfloor$. 
 
 The following theorem appears not to have been observed before.
 
 \begin{theorem} \label{lemcorr}
 If $g$ has the c-property then CRC($g$) corrects cyclic burst errors of length at most $\lfloor (n-k)/2 \rfloor$.
\end{theorem}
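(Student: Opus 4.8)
The plan is to reduce the statement to the standard principle that a linear code corrects every error pattern in a family $\mathcal E$ exactly when distinct patterns in $\mathcal E$ have distinct syndromes, and then to extract this injectivity from the c-property for the family of cyclic bursts of length at most $\ell:=\lfloor (n-k)/2\rfloor$. This is the cyclic analogue of the classical argument behind the Reiger bound, specialised to $\mathrm{CRC}(g)$.

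First I would record that the syndrome map $e\mapsto s(e)=eH^{t}$ is $\mathbb{F}_2$-linear and, by Lemma~\ref{classicalsyn}, is reduction of $e(X)$ modulo $g$. Hence for $e_1,e_2\in\mathbb{F}_2^{n}$ one has $s(e_1)=s(e_2)$ if and only if $g\mid e_1(X)+e_2(X)$ in $\mathbb{F}_2[X]$; since $\deg(e_1(X)+e_2(X))<n$, this polynomial is unchanged on passing to $\mathbb{F}_2[X]/(X^n-1)$, so (using $g\mid X^n-1$) the condition says precisely that $e_1(X)+e_2(X)$ is a multiple of $g$ in that quotient ring. Now I would take distinct vectors $e_1\ne e_2$, each of cyclic burst length at most $\ell$, and suppose for contradiction that $s(e_1)=s(e_2)$. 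Then $e_1(X)+e_2(X)$ is a nonzero multiple of $g$ in $\mathbb{F}_2[X]/(X^n-1)$ which is, by definition, the sum of the two polynomials $e_1(X)$ and $e_2(X)$, each of cyclic burst length at most $\ell=\lfloor (n-k)/2\rfloor$; this contradicts the c-property. So distinct cyclic bursts of length at most $\ell$ have pairwise distinct syndromes.

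Finally I would conclude by syndrome decoding: if the received word is $v=u+e$ with $u\in\mathrm{CRC}(g)$ and $\mathrm{cbl}(e)\le\ell$, then $s(v)=s(e)$, and by the injectivity just established $e$ is the unique cyclic burst of length at most $\ell$ with that syndrome, so $e$ — and hence $u$ — is recovered. The only delicate points are the bookkeeping between ``multiple of $g$ in $\mathbb{F}_2[X]$'' and ``multiple of $g$ in $\mathbb{F}_2[X]/(X^n-1)$'', which is handled by the degree bound $\deg(e_1(X)+e_2(X))<n$, and the (clearly intended) reading of the c-property as ruling out only nonzero multiples of $g$; beyond these I do not expect any substantial obstacle.
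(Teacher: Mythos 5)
Your proposal is correct and follows essentially the same route as the paper: two cyclic bursts of length at most $\lfloor (n-k)/2\rfloor$ with equal syndromes would make $e_1+e_2$ a multiple of $g$ (via Lemma~\ref{classicalsyn}) expressible as a sum of two such bursts, contradicting the c-property. Your version merely adds explicit bookkeeping (the passage between $\mathbb{F}_2[X]$ and $\mathbb{F}_2[X]/(X^n-1)$ and the syndrome-decoding conclusion) that the paper leaves implicit.
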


\begin{proof}
Let $e_1$ and $e_2$ be errors of burst weight at most $\lfloor \frac{n-k}{2} \rfloor$.
If $s(e_1)=s(e_2)$ then $s(e_1-e_2)=0$. By Lemma \ref{classicalsyn}, $(e_1-e_2)(X)$ is a multiple of $g$, which contradicts the c-property. Thus, $s(e_1)\neq s(e_2)$.
 
 Thus, if $v=u+e$, where $u\in C$ and $e$ is the error vector of burst weight at most $\lfloor \frac{n-k}{2} \rfloor$ then $s(e)=s(v)$ and since $e$ can be identified by its syndrome, we can determine $u$.
\end{proof}

Let $g_i$ be the coefficient of $X^i$ in $g(X)$, and define a $k \times n$ generator matrix for the cyclic code $\langle g \rangle$ of length $n$, 
$$
\mathrm{G}=\left(
\begin{array}{ccccccc}
 g_0 & \ldots & g_{n-k} & 0 & \ldots & \ldots & 0 \\
0 & g_0  & \ldots & g_{n-k} & 0 & \ldots & 0 \\
0& 0 & \ddots & \ldots & \ddots & \ddots & \vdots \\
\vdots & & \ddots & \ddots & & \ddots & 0 \\
0 & \ldots & \ldots & 0 & g_0 & \ldots & g_{n-k}\\
\end{array} \right).
$$
If $n-k$ is odd then let $\mathrm{G}_r$ be the $(r+1)\times (r+2)$ matrix which is the submatrix of $\mathrm{G}$ formed from the $((n-k+1)/2)$-th to $((n-k+1)/2+r+1)$-th column and from the first to the $(r+1)$-th row.

If $n-k$ is even then let $\mathrm{G}_r$ be the $(r+1)\times (r+1)$ matrix which is the submatrix of $\mathrm{G}$ formed from the $((n-k)/2+1)$-th to $((n-k)/2+r+1)$-th column and from the first to the $(r+1)$-th row.

\begin{theorem} \label{cpropcheck}
The polynomial $g$ has the c-property if and only if for all $r \in \{0,\ldots,k-1\}$ the matrix $\mathrm{G}_r$ has rank $r+1$.
\end{theorem}

\begin{proof}
The polynomial $g$ does not have the c-property if and only if the matrix $G$ has a non-zero vector in its row span whose support is contained in two bursts of length at most $\lfloor (n-k)/2 \rfloor$.
Since $\langle g \rangle$ is a cyclic code this is if and only if the matrix $G$ has a non-zero vector $u$ in its row span whose support is contained in two bursts of length at most $\lfloor (n-k)/2 \rfloor$, whose first coordinate is non-zero. 

Let $r$ be maximum such that $u_{n-k+r+1} \neq 0$. Then the support of $u$ is contained in the first $\lfloor (n-k)/2 \rfloor$ coordinates and the $(n-k+r+2-\lfloor (n-k)/2 \rfloor)$-th to the $(n-k+r+1)$-th coordinate. Furthermore, $u$ is a linear combination of the first $r+1$ rows of $G$, since in the $i$-th row of $G$, the $(n-k+i)$-th coordinate is non-zero. Thus, $u$ is a linear combination of the first $r+1$ rows of $G$ in which the $\lfloor (n-k)/2 \rfloor+1$ to $(n-k+r+1-\lfloor(n-k)/2 \rfloor)$-th coordiantes are zero. This is precisely the condition that  $\mathrm{G}_r$ has rank at most $r$.
\end{proof}

Theorem~\ref{cpropcheck} gives a fast algorithm for checking if $g$ has the c-property. In fact, since $\langle g \rangle$ is cyclic one can assume that the string of $r+1$ zeros in $u$ is no longer than the string of $k-r-1$ zeros from the $(n-k+r+2)$-th coordinate to the $n$-th coordinate. Thus it is sufficient to check the rank of $G_r$ for $r\in \{0,\ldots,(k-2)/2\}$.

Tables~\ref{cproptable1}, \ref{cproptable2},
list all polynomials $g \in {\mathbb F}_2[X]$ which are factors of $X^n-1$ and have the c-property for all $n \leqslant 27$. We have omitted $g(X)=(X^n-1)/(X-1)$ and $g(X)=X-1$ from the table since these always have the c-property.

In \cite{AMD2021}, some analysis was carried out showing that decoding algorithms such as GRAND, can be used to correct errors using CRC codes. These results do not make use of polynomials which necessarily have the c-property. It would be interesting to have more sophisticated decoding algorithms which were tailored to a particular choice of $g$ and make use of Theorem~\ref{lemcorr}.

\section{Quantum CRC codes}

In this section we construct a quantum CRC code from a classical CRC code and analyse its error correction and detection properties. 

For the remainder of the article, we restrict to the case $p=2$.

Let $H$ be defined as in (\ref{pmatrix}) for this $g$.

By Lemma \ref{quantummatrix}, the matrix

\begin{equation} \label{Gmatr}
G=(H|H_{+\ell}+H_{-\ell})
\end{equation}

is the generator matrix of a linear code $C(S)$, for some abelian subgroup $S$ of $\mathcal{P}_n$ defining $\mathcal Q_g(S)$, an $[\![n,k]\!]$ quantum stabilizer code, which depends only on the chosen polynomial $g$. 

We fix $\ell=\lfloor (n-k)/4 \rfloor$.

For a Pauli error $E \in \mathcal{P}_n$, we define 
$$
e=(e_1 | e_2)=\tau(E) \in {\mathbb F}_2^{2n}
$$ 
which we identify with a polynomial of degree at most $2n-1$,

$$e(X)=e_1(X) + X^n  e_2 (X),$$

by identifying $e_1, e_2 \in  {\mathbb F}_2^{n}$ with a polynomial of degree at most $n-1$ as before.

As we saw in (\ref{syndreq}), the syndrome of an error $e=(e_1 | e_2)$, with respect to an $(n-k)\times 2n$ matrix $(A | B)$ is
$$
s(e)=Ae_2^t -Be_1^t.
$$

\begin{lemma} \label{synlem}
    The syndrome of an error $e(X)=e_1(X) + X^n e_2(X)$, with respect to the code $\mathcal Q=\mathcal Q_g(S)$, is
    $$s_\mathcal Q (e)= X^{-\ell}e_1+X^{\ell} e_1 + e_2 \pmod{g}.$$
\end{lemma}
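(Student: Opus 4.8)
The plan is to combine the generic syndrome formula for a matrix $(A|B)$ with the specific shape of $G=(H\,|\,H_{+l}+H_{-l})$ and with Lemma~\ref{classicalsyn}. First I would apply the displayed syndrome formula immediately preceding the statement with $A=H$ and $B=H_{+l}+H_{-l}$, giving
$$
s_Q(e)=He_2-(H_{+l}+H_{-l})e_1.
$$
Then I would identify each of these three terms with a polynomial modulo $g$ using Lemma~\ref{classicalsyn}: the term $He_2$ is, as a polynomial, $e_2(X)\bmod g$. For the shifted matrices, the key observation is that multiplying the check matrix of the cyclic code by $e_1$ after cyclically shifting its columns by $l$ positions produces the syndrome of the cyclically shifted error vector, i.e. $(H_{\pm l})e_1$ corresponds to $X^{\pm l}e_1(X)\bmod g$ (here one uses that $g\mid X^n-1$, so that cyclic shift by $l$ in ${\mathbb F}_2[X]/(X^n-1)$ is compatible with reduction mod $g$, and hence multiplication by $X^{\pm l}$ acts correctly on syndromes). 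Collecting the pieces and using that arithmetic is over ${\mathbb F}_2$ (so the signs disappear and $X^{-l}=X^{n-l}$), one obtains
$$
s_Q(e)=X^{-l}e_1+X^{l}e_1+e_2\pmod g,
$$
which is the claim.

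The main step to be careful about — and the one I expect to be the only real obstacle — is justifying the correspondence ``shifting the columns of $H$ by $l$ and then applying it to $e_1$'' $\equiv$ ``$X^{l}\cdot e_1(X)\bmod g$''. The subtlety is that $H$ in (\ref{pmatrix}) is in systematic form and not literally a ``cyclic'' parity check matrix, so one cannot naively invoke a cyclic-code identity; instead I would argue directly. Writing $H_{+l}$ for $H$ with columns shifted cyclically, the $j$-th column of $H_{+l}$ equals the $(j-l)$-th column of $H$ (indices mod $n$), and the $i$-th column of $H$, read as a polynomial, is $X^{i-1}\bmod g$ by construction (the identity block contributes $X^{j-1}$ for $j\le n-k$, and the $r_i$ block contributes $X^{i-1+n-k}\bmod g=X^{\,(n-k+i)-1}\bmod g$ by (\ref{reqn})). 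Hence the polynomial attached to column $j$ of $H_{+l}$ is $X^{j-1-l}\bmod g$, and since $g\mid X^n-1$ this is well-defined modulo the cyclic index; summing over $j$ weighted by $(e_1)_j$ gives $X^{-l}e_1(X)\bmod g$. The same computation with $-l$ handles $H_{-l}$. This column-by-column polynomial bookkeeping is exactly the content of the proof of Lemma~\ref{classicalsyn}, so I would present it as a direct reuse of that argument rather than reproving it.

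So the proof is short: invoke the preceding syndrome formula to write $s_Q(e)=He_2-(H_{+l}+H_{-l})e_1$; invoke Lemma~\ref{classicalsyn} (equivalently, the column/polynomial identification together with $g\mid X^n-1$) to read $He_2\equiv e_2$, $H_{+l}e_1\equiv X^{-l}e_1$, $H_{-l}e_1\equiv X^{l}e_1$, all modulo $g$; and finally note that over ${\mathbb F}_2$ the minus signs are irrelevant, yielding $s_Q(e)=X^{-l}e_1+X^{l}e_1+e_2\pmod g$. No deeper machinery is needed; the entire content is the translation between cyclic column shifts of the systematic check matrix and multiplication by powers of $X$ modulo $g$.
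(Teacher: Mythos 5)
Your proposal is correct and follows essentially the same route as the paper: write the syndrome as $He_2-(H_{+l}+H_{-l})e_1$ and reduce each piece via Lemma~\ref{classicalsyn}, translating the cyclic column shift of $H$ into multiplication by $X^{\mp l}$ modulo $g$ (the paper does this by reindexing the error vector, you by reading off the column polynomials, which is the same bookkeeping). If anything, you are slightly more explicit than the paper about the role of $g\mid X^n-1$ in handling the wraparound; the small sign slip in your first paragraph ($H_{+l}e_1$ versus $X^{\pm l}e_1$) is immaterial since the final expression is symmetric in $\pm l$.
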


\begin{proof}
With respect to the code $\mathcal Q=\mathcal Q_g(S)$, the syndrome is calculated using the matrix $G=(H|H_{+\ell}+H_{-\ell})$. By Lemma~\ref{classicalsyn}, the $e_2$ part of the error contributes $e_2(X)$ modulo $g$ to the syndrome and the $e_1$ part of the error contributes $X^{-\ell}e_1+X^{\ell} e_1$ modulo $g$ to the syndrome.

\end{proof}

\begin{theorem} \label{thm213}
 The $[\![n,k]\!]$ quantum CRC code $\mathcal Q_g(S)$ detects errors of burst weight at most $2\ell$.\end{theorem}

\begin{proof}
Suppose that $e(X)=e_1(X) + X^n e_2(X)$ is a polynomial of symplectic burst weight at most $2\ell$.

We will prove that if $s_Q(e)=0$ then $e=0$. 

Suppose that $s_Q(e)=0$ and that the burst starts in the $r$-th coordinate. Then 
$$
e_1(X)=X^{r-1}f_1(X),\ \mathrm{and} \ e_2(X)=X^{r-1}f_2(X)
$$
for some polynomials $f_i(X)$ of degree at most $2\ell-1$.

By Lemma~\ref{synlem},
$$ X^{-\ell}e_1+X^{\ell} e_1 + e_2=0 \pmod{g}$$
which implies
$$ X^{-\ell}f_1+X^{\ell} f_1 + f_2=0 \pmod{g}.$$

By Lemma \ref{non-zero}, either 
$$
X^{-\ell}f_1+X^{\ell} f_1 + f_2=0  \pmod{X^n-1}
$$ 
or the burst length of 
$X^{-\ell}f_1+X^{\ell} f_1 + f_2$ is at least $n-k+1$. But  the burst length of $X^{-\ell}f_1+X^{\ell} f_1 + f_2$ is at most $2\ell+\deg(f_1)+1 \leqslant 4\ell$. Hence,
$$
X^{-\ell}f_1+X^{\ell} f_1 + f_2=0 \pmod{X^n-1}.
$$
Thus, for $j=0,\ldots,\ell-1$, the coefficients of $X^j$ in $f_1$ are zero. Since the degree of $f_2$ is at most $2\ell-1$, it follows that, for $j=0,\ldots,\ell-1$, the coefficients of $X^{\ell+j}$ in $f_1$ are also zero. Hence, $f_1=0$, from which it follows that $f_2=0$ and thus, $e=0$.
\end{proof}

\begin{theorem}
 If $g$ has the c-property then the $[\![n,k]\!]$ code $Q_g(S)$ can correct errors of burst weight at most $\ell$.
\end{theorem}

\begin{proof}
Suppose that $e(X)=e_1(X) + X^n e_2(X)$ and $e'(X)=e'_1(X) + X^n e'_2(X)$ are polynomials of symplectic burst weight at most $\ell$.

We will prove that if $s_Q(e)=s_Q(e')$ then $e=e'$. 

By Lemma \ref{synlem},
$$ X^{-\ell}e_1+X^{\ell} e_1 + e_2-( X^{-\ell}e_1'+X^{\ell} e_1' + e_2')=0 \pmod{g}.$$

As in the proof of Theorem~\ref{thm213}, $X^{-\ell}e_1+X^{\ell} e_1+e_2$ has burst weight at most $2 \ell$.



This implies there is a multiple of $g$ in the ring ${\mathbb F}_2[X]/(X^n-1)$ which is the sum of 
two polynomials of length at most  $2\ell$. Since $g$ has the c-property this implies 
$$ X^{-\ell}e_1+X^{\ell} e_1 + e_2-(X^{-\ell}e_1'+X^{\ell} e_1' + e_2')=0 \pmod{X^n-1}.$$
from which we deduce, as in the previous proof, that $e=e'$.
\end{proof}

In the next section, we will give an example of $g(X)$ for which the quantum CRC code constructed as in this section can be equipped with a linear time decoding algorithm capable of correcting all errors of cyclic burst length at most $\lfloor (n-k)/4\rfloor$.

\section{A fast decoding algorithm for a particular family of quantum CRC codes}

In this section we will detail a fast decoding algorithm for a particular family of qubit quantum CRC codes. We suppose $n=mk$ and $l=ck$ where $m$ and $c$ are whole numbers, taken to be constants. Observe that the quantum Reiger bound implies $m \geqslant 4c+1$, which implies that the rate of the code is at most $1/(4c+1)$.

Note that in the qubit case, we denote the Pauli matrices by
$$
I=\begin{pmatrix}1&0\\0&1\end{pmatrix}, X=\begin{pmatrix}0&1\\1&0\end{pmatrix},
Y=\begin{pmatrix}0&-i\\i&0\end{pmatrix},  
Z=\begin{pmatrix}1&0\\0&-1\end{pmatrix}
$$
and so, as defined in Section~\ref{qecsec},
$$
\mathcal P_n:=\{ c\sigma_1 \otimes \cdots \otimes \sigma_n \ | \ \sigma_j \in \{I,X,Y,Z\} \}.
$$

\begin{theorem}
If $g(X)=X^{n-k} + X^{n-2k}+ \dots +X^k +1$ then there is a decoding algorithm for $Q_g(S)$ whose complexity scales linearly with n.   
\end{theorem}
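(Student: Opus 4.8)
The plan is to exploit the very special shape of $g$. For this Reiger‑bound‑attaining family we have $n=mk$ with $m=4c+1$ and $l=ck$, and
\[g(X)=X^{n-k}+X^{n-2k}+\cdots+X^{k}+1=1+X^{k}+\cdots+X^{(m-1)k}=\tfrac{X^{n}-1}{X^{k}-1},\]
so $g\mid X^{n}-1$ and the multiples of $g$ in $\F_2[X]/(X^{n}-1)$ are exactly the vectors of period $k$: one length-$k$ block $w$ repeated $m$ times. Hence reduction modulo $g$ (after reduction modulo $X^{n}-1$) is a purely block-wise operation computable in $O(n)$ time, and multiplication by $X^{\pm l}=X^{\pm ck}$ is a cyclic shift by $\pm c$ whole blocks.

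I would first use this to check that $g$ has the c-property, so that, by the QCRC correction theorem above, $Q_g(S)$ corrects all cyclic bursts of length $\le l$ and, in particular, the error $(e_1\mid e_2)$ is the \emph{unique} such error with a given syndrome. Indeed, a non-zero multiple of $g$ in $\F_2[X]/(X^{n}-1)$ is a period-$k$ vector with non-zero block $w$; any $k$ consecutive coordinates of it contain exactly $\mathrm{wt}(w)$ ones, so it has $m\,\mathrm{wt}(w)=(4c+1)\,\mathrm{wt}(w)$ ones altogether, whereas a word of cyclic burst length at most $\lfloor(n-k)/2\rfloor=2ck$ is supported in a cyclic interval covered by $2c$ intervals of length $k$, hence meets that support in at most $2c\,\mathrm{wt}(w)$ coordinates. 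Two such bursts therefore cover at most $4c\,\mathrm{wt}(w)<(4c+1)\,\mathrm{wt}(w)$ of the ones, so their sum cannot be this multiple of $g$.

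Next I would unwind Lemma~\ref{synlem} block by block. Writing $e_1,e_2$ as simultaneously supported in the $c+1$ consecutive blocks $[a,a+c]$ (some possibly zero) and setting $S:=X^{-l}e_1+X^{l}e_1+e_2\in\F_2[X]/(X^{n}-1)$, Lemma~\ref{synlem} gives $s_Q(e)\equiv S\pmod g$; thus $S$ and the syndrome differ by a period-$k$ vector, i.e.\ $S=s_{\mathrm{ext}}+(w,\dots,w)$ for a unique block $w\in\F_2^{k}$, where $s_{\mathrm{ext}}$ is the syndrome zero-padded to length $n$ (its last block being $0$). Since $X^{\pm l}$ shifts $e_1$ by $\pm c$ blocks, $S$ is supported on the $3c+1$ consecutive blocks $[a-c,a+2c]$, and comparing its blocks with those of $e_1$ and $e_2$ shows that, once $a$ is known, each block of $e_1$ equals a single, explicitly indexed block of $S$, the inner blocks of $e_2$ equal the corresponding blocks of $S$, and each of the two outer blocks of $e_2$ is a sum of two blocks of $S$. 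Crucially, $S$ vanishes on the remaining $m-(3c+1)=c$ consecutive blocks, so $s_{\mathrm{ext}}$ takes the value $w$ on a run of at least $c$ consecutive blocks; in particular $w$ is the value of some block of $s_{\mathrm{ext}}$. The algorithm is then: (i) take the measured syndrome and split $s_{\mathrm{ext}}$ into its $m$ blocks; (ii) for every $i,a\in\{0,\dots,m-1\}$ set $w$ to be the $i$-th block of $s_{\mathrm{ext}}$, form $\hat S:=s_{\mathrm{ext}}+(w,\dots,w)$ and read off a candidate $(e_1,e_2)$ from the blocks of $\hat S$ via the formulas above with the burst assumed to begin in block $a$; (iii) output any candidate whose cyclic burst length is $\le l$ and whose syndrome equals the measured one. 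The true error is among the candidates produced in (ii) (take $i$ to index a block on which $S$ vanishes and $a$ the true starting block), and by the uniqueness of the previous paragraph it is the only candidate surviving (iii); since $m=4c+1$ is a constant there are $O(1)$ candidates, each built and tested in $O(n)$ time, so the algorithm runs in time $O(n)$.

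The steps are individually routine once the block picture is in place; the substantive point — the one I would be most careful about — is the c-property count, together with the observation that modulo $g$ the word $S$ is pinned down only up to adding a single period-$k$ summand $(w,\dots,w)$, and that this lone unknown block $w$ can be read back off as the common value of the syndrome on the $c$ consecutive blocks where $S$ vanishes. The remainder is clerical: bursts straddling the block-$X^{n-1}$-to-block-$X^{0}$ seam, bursts occupying fewer than $c+1$ blocks, and the degenerate cases $e_1=0$ or $e_2=0$ are all absorbed by reading block indices modulo $m$ and by the final burst-length and syndrome checks.
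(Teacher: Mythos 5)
Your proposal is correct, but it is a genuinely different argument from the paper's. The paper decodes by de-interleaving: since $g(X)=1+X^k+\cdots+X^{(m-1)k}$ makes $G$ quasicyclic, the stabiliser group splits into $k$ copies of a base $[\![m,1]\!]$ code, and each sub-syndrome (the coordinates congruent to $j$ modulo $k$) is decoded locally in constant time, using a constant-size look-up table for the errors that touch the three exceptional columns (positions $c$, $m-c$, $m$) and otherwise the pattern rules ``an $X$ error at $p$ flags positions $p\pm c$, a $Z$ error at $p$ flags $p$'', together with a short collision argument; linearity follows from running this constant-time routine $k$ times. You instead stay global: you observe $g=(X^n-1)/(X^k-1)$, so multiples of $g$ modulo $X^n-1$ are exactly the period-$k$ words, verify the c-property by the weight count $4c\,\mathrm{wt}(w)<(4c+1)\mathrm{wt}(w)$ (so the earlier correction theorem gives uniqueness of the burst error for a given syndrome --- a verification the paper never actually carries out for this $g$), and then note that $X^{-l}e_1+X^le_1+e_2$ differs from the zero-padded syndrome by a single unknown repeated block $w$, is supported on $3c+1$ consecutive blocks and hence vanishes on at least $c$ blocks, from which $w$ and then $(e_1,e_2)$ can be read off; enumerating the $O(1)$ choices of $(w,a)$ and verifying each candidate in $O(n)$ gives linear time. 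Your route buys a cleaner, fully rigorous correctness proof (uniqueness plus explicit reconstruction formulas), at the price of an enumerate-and-verify decoder with larger constants, whereas the paper's sub-syndrome decoder is streaming and more practical (and exhibits the interleaved base-code structure and explicit tables), though its correctness argument is more informal. One small caveat: you fix $m=4c+1$ (the Reiger-attaining case, which the paper also adopts for its tables), while the section's setup allows any $m\ge 4c+1$; your reconstruction step extends verbatim (there are still at least $c$ vanishing blocks), but the uniqueness count would need a minor adaptation for general $m$.
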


\begin{proof}

The choice of $g$ implies that the matrix $G$, defined as in (\ref{Gmatr}) will have a quasicyclic structure. 

The subgroup $S$ which stabilises the $[\![n,k]\!]$ code $\mathcal Q(S)$ can be constructed by interleaving with $k$ base $[\![m,1]\!]$ codes, which are constructed from a subgroup $S_j$ of $({\mathbb C}^2)^{\otimes m}$, where $S_j$ is obtained from $S$ considering only those qubits (coordinates) which are congruent to $j$ modulo $k$. Hence, $j \in \{0,\ldots,k-1\}$.

Suppose that we have measured the quantum system and obtained a syndrome $s \in \{+,-\}^{n-k}$.

We define the subsyndrome $s_j$ to be the restriction of $s$ to the $i$ coordinates where $i=j$ modulo $k$, so $s_j \in \{+,-\}^{m-1}$. Note that here we use $(+,-)$ in place of $(0,1)$ which we used for the classical syndromes.

We will determine the error $E \in \mathcal P_n$ by determining $E_j \in \mathcal P_m$, for each $j \in \{0,\ldots,k-1\}$.

 If we undo the interleaving of the matrix, we will get $k$ identical matrices of the dimension "1" code. We can see, by the definition of the subsyndrome and the interleaving of the matrix, that each subsyndrome will only depend on the dimension 1 matrix and the qubits (coordinates) of the quantum system which are congruent to $j$ modulo $k$.

Firstly, we will analyze the construction of a dimension $k=1$ code matrix since it is this matrix alone that determines the subsyndrome.

Considering the construction of the matrix (\ref{Gmatr}), we see that the dimension $k=1$ code matrix has on the left side an $(m-1) \times (m-1)$ identity matrix where the $m$-th column is the all one column. On the right side of the matrix, we will have the sum of this matrix shifted $c$ positions to the right and to the left.

If we look at this matrix, see for example (\ref{eqn11}), we can clearly see 3 columns whose weight is more than $2$. The last column of the left-hand side matrix is the all $1$ vector, and the $c$ and $m-c$ column of the right-hand side are the all $1$ vector summed with a vector of weight one. 

The syndromes which are affected by these columns are those which have Pauli errors in the positions indicated in Table~\ref{dodgyerrors}.

\begin{table}[h]
    \centering
    
    \begin{tabular}{|c|c|}
    \hline
    1 & An $X$ error in the $c$ position.\\
  2 &  An $X$ error in the $m-c$ position.\\
   3 & A $Z$ error in the $m$ position.\\
   4 & A $Y$ error in the $m$ position.\\
   5 & A $Y$ error in the $c$ position.\\
   6 &  A $Y$ error in the $m-c$ position.\\
   \hline
    \end{tabular}
    \caption{The errors which are located by the sub-syndrome look-up table.}
    \label{dodgyerrors}
\end{table}

We construct a look up table for all possible errors that fit one of these errors 1-6.
Since, by assumption, the error $E$ has a burst length of length at most $\ell$, it follows that $E_j$ has a maximum burst length of $c$. Thus, one can only have at most one of the above errors 1-6 occurring. Hence, the number of syndromes in the look up table is bounded by the number of possible combinations with repetition of $X,Y,Z,I$ of length $c-1$, $4^{c-1}$ multiplied by the number of different possible errors, $6$, multiplied by $c$, since the sub-burst error, with an error from Table~\ref{dodgyerrors}, can start in any one of $c$ positions.

Hence, we will have at most $4^{c-1}6c$ syndromes in the look-up table. 

Since we consider $c$ as a constant, this look-up table will have just a constant number of entries.

 The different syndromes for the individual errors cannot collide. The $Z$ error clearly can only collide with a $X$ error or the $X$ part of the $Y$ error. But as the flag for these errors are at $c$ distance from $p$, to have a collision we would need two errors separated by $c+1$ which is longer than the maximum length of a correctable burst. And for an $X$ error to collide with another $X$ error or the $X$ error part of a $Y$ error, the distance between the $p's$ would have to be $2c$ which is obviously impossible.

If the sub-syndrome does not appear in the look-up table then we conclude that none of the errors in Table~\ref{dodgyerrors} have occurred and proceed as follows.
To locate the $X$ errors, note that the matrix structure implies that, when calculating the syndrome, a X error in position $p$, will raise the "flag" (i.e. a ``-'' in the sub-syndrome) in the position $p-c \pmod{m}$ and $p+c \pmod{m}$. Thus, we search for this 2-point fork structure, and if we find such a structure conclude that an $X$ error has occurred in the $p$ position. Note that the sub-syndrome has length $m-1$ and it could be that $p-c \pmod{m}$ or $p+c \pmod{m}$ is equal to $m$. But this only occurs when $p=c$ or $p=m-c$ and these are specifically the cases we corrected with the lookup table, so these cases do not occur here.

To locate the $Z$ errors, note that the matrix structure implies that, when calculating the syndrome, a $Z$ error in position $p$, will raise the "flag" (i.e. a ``-'' in the sub-syndrome) in the position $p$.

Finally a $Y$ error is deduced if we deduce an $X$ and a $Z$ error in the same position.

 The overall complexity of this algorithm is linear in $n$, since we have run through the main loop $k$ times and within the loop we have a sub-loop which is run $m$ times and we check a look up table whose size depends only on $c$. Note that since $m$ also only depends on $c$, the complexity is $O(k)=O(n)$.

\end{proof}

 The look up tables for $c=1$ and $c=2$ are in Table~\ref{cis1} and Table~\ref{cis2} respectively. Note that we fix $m=4c+1$, so that the code we are using attains the Reiger bound.

As mentioned above, the look up table will have $4^{c-1}6c$ syndromes.
\begin{table}
    \centering
    \begin{tabular}{|l|l|}
    \hline
        X, I, I, I, I & -, +, -, - \\ \hline
         Y, I, I, I, I &  +, +, -, - \\ \hline
         I, I, I, X, I &  -, -, +, - \\ \hline
         I, I, I, Y, I &  -, -, +, + \\ \hline
         I, I, I, I, Z &  -, -, -, - \\ \hline
         I, I, I, I, Y &  +, -, -, + \\ \hline
    \end{tabular}
\\
    \caption{The subsyndrome look-up table for $c=1$.} 
    \label{cis1}
\end{table}

\begin{table}
\begin{footnotesize}
    \centering
    \begin{tabular}{|l|l|}
    \hline
        I, X, I, I,  I, I, I, I, I  & -, -, -, +, -, -, -, - \\ \hline
        X, X, I, I,  I, I, I, I, I  &  -, -, +, +, -, -, -, + \\ \hline
           Z, X, I, I,  I, I, I, I, I  &  +, -, -, +, -, -, -, - \\ \hline
           Y, X, I, I, I, I, I, I, I  &  +, -, +, +, -, -, -, + \\ \hline
           I, Y, I, I, I, I, I, I, I  &  -, +, -, +, -, -, -, - \\ \hline
           X, Y, I, I, I, I, I, I, I  &  -, +, +, +, -, -, -, + \\ \hline
           Z, Y, I, I, I, I, I, I, I  &  +, +, -, +, -, -, -, - \\ \hline
           Y, Y, I, I, I, I, I, I, I  &  +, +, +, +, -, -, -, + \\ \hline
           I, X, I, I, I, I, I, I, I  &  -, -, -, +, -, -, -, - \\ \hline
           I, X, X, I, I, I, I, I, I  &  +, -, -, +, +, -, -, - \\ \hline
           I, X, Z, I, I, I, I, I, I  &  -, -, +, +, -, -, -, - \\ \hline
           I, X, Y, I, I, I, I, I, I  &  +, -, +, +, +, -, -, - \\ \hline
           I, Y, I, I, I, I, I, I, I  &  -, +, -, +, -, -, -, - \\ \hline
           I, Y, X, I, I, I, I, I, I  &  +, +, -, +, +, -, -, - \\ \hline
           I, Y, Z, I, I, I, I, I, I  &  -, +, +, +, -, -, -, - \\ \hline
           I, Y, Y, I, I, I, I, I, I  &  +, +, +, +, +, -, -, - \\ \hline
           I, I, I, I, I, I, X, I, I  &  -, -, -, -, +, -, -, - \\ \hline
           I, I, I, I, I, X, X, I, I  &  -, -, -, +, +, -, -, + \\ \hline
           I, I, I, I, I, Z, X, I, I  &  -, -, -, -, +, +, -, - \\ \hline
           I, I, I, I, I, Y, X, I, I  &  -, -, -, +, +, +, -, + \\ \hline
           I, I, I, I, I, I, Y, I, I  &  -, -, -, -, +, -, +, - \\ \hline
           I, I, I, I, I, X, Y, I, I  &  -, -, -, +, +, -, +, + \\ \hline
           I, I, I, I, I, Z, Y, I, I  &  -, -, -, -, +, +, +, - \\ \hline
           I, I, I, I, I, Y, Y, I, I  &  -, -, -, +, +, +, +, + \\ \hline
           I, I, I, I, I, I, X, I, I  &  -, -, -, -, +, -, -, - \\ \hline
           I, I, I, I, I, I, X, X, I  &  +, -, -, -, +, +, -, - \\ \hline
           I, I, I, I, I, I, X, Z, I  &  -, -, -, -, +, -, -, + \\ \hline
           I, I, I, I, I, I, X, Y, I  &  +, -, -, -, +, +, -, + \\ \hline
           I, I, I, I, I, I, Y, I, I  &  -, -, -, -, +, -, +, - \\ \hline
           I, I, I, I, I, I, Y, X, I  &  +, -, -, -, +, +, +, - \\ \hline
           I, I, I, I, I, I, Y, Z, I  &  -, -, -, -, +, -, +, + \\ \hline
           I, I, I, I, I, I, Y, Y, I  &  +, -, -, -, +, +, +, + \\ \hline
           I, I, I, I, I, I, I, I, Z  &  -, -, -, -, -, -, -, - \\ \hline
           I, I, I, I, I, I, I, X, Z  &  +, -, -, -, -, +, -, - \\ \hline
           I, I, I, I, I, I, I, Z, Z  &  -, -, -, -, -, -, -, + \\ \hline
           I, I, I, I, I, I, I, Y, Z  &  +, -, -, -, -, +, -, + \\ \hline
           I, I, I, I, I, I, I, I, Y  &  -, +, -, -, -, -, +, - \\ \hline
           I, I, I, I, I, I, I, X, Y  &  +, +, -, -, -, +, +, - \\ \hline
           I, I, I, I, I, I, I, Z, Y  &  -, +, -, -, -, -, +, + \\ \hline
           I, I, I, I, I, I, I, Y, Y  &  +, +, -, -, -, +, +, + \\ \hline
           I, I, I, I, I, I, I, I, Z  &  -, -, -, -, -, -, -, - \\ \hline
           X, I, I, I, I, I, I, I, Z  &  -, -, +, -, -, -, -, + \\ \hline
           Z, I, I, I, I, I, I, I, Z  &  +, -, -, -, -, -, -, - \\ \hline
           Y, I, I, I, I, I, I, I, Z  &  +, -, +, -, -, -, -, + \\ \hline
           I, I, I, I, I, I, I, I, Y  &  -, +, -, -, -, -, +, - \\ \hline
           X, I, I, I, I, I, I, I, Y  &  -, +, +, -, -, -, +, + \\ \hline
           Z, I, I, I, I, I, I, I, Y  &  +, +, -, -, -, -, +, - \\ \hline
           Y, I, I, I, I, I, I, I, Y  &  +, +, +, -, -, -, +, + \\ \hline
    \end{tabular}
    \caption{The subsyndrome look-up table for $c=2$.} 
    \label{cis2}
    \end{footnotesize}
\end{table}

The decoding algorithm detailed above is summarised in the flow chart, Figure~\ref{flowc}.

\newpage

\begin{figure}
\begin{center}
\includegraphics[width=0.45\textwidth]{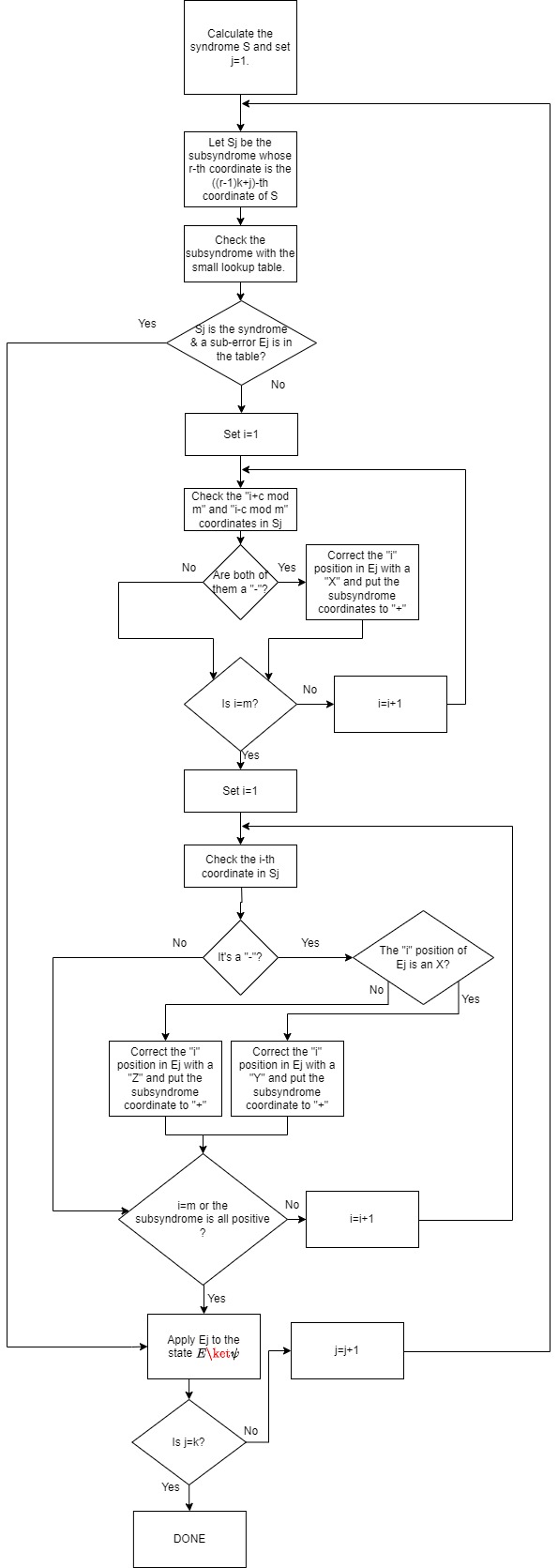}
 \caption{The linear time decoding algorithm as a flow chart.} 
 \label{flowc}
 \end{center}
\end{figure}

In conclusion, the algorithmic part scales with $n$ and the look up table scales exponentially with "c". However, since $c=l/k$ and $\ell=\frac{n-k}{4}$, this can be taken to be constant if we assume that  $k$ scales with $n$.

\begin{example}
Consider the $[[18,2]]$ code, which can correct bursts of length less than or equal to $4$. The (quasi-cyclic) matrix $G$ in this case is the following.

\begin{equation}
	\mathrm G_{[[18,2]]}=	\begin{pmatrix}
		\begin{array}{cccccccc|cccccccccccccccccc}
			
   1& 0& 0& 0& \dots 0& 0& 1& 0& 0& 0& 1& 0& 1& 0& 0& 0& 0& 0& 0& 0& 1& 0& 1& 0& 0& 0 \\
   0& 1& 0& 0& \dots 0& 0& 0& 1& 0& 0& 0& 1& 0& 1& 0& 0& 0& 0& 0& 0& 0& 1& 0& 1& 0& 0 \\
   0& 0& 1& 0& \dots 0& 0& 1& 0& 0& 0& 1& 0& 0& 0& 1& 0& 0& 0& 0& 0& 1& 0& 0& 0& 1& 0 \\
   0& 0& 0& 1& \dots 0& 0& 0& 1& 0& 0& 0& 1& 0& 0& 0& 1& 0& 0& 0& 0& 0& 1& 0& 0& 0& 1 \\
   0& 0& 0& 0& \dots 0& 0& 1& 0& 1& 0& 1& 0& 0& 0& 0& 0& 1& 0& 0& 0& 1& 0& 0& 0& 0& 0 \\
   0& 0& 0& 0& \dots 0& 0& 0& 1& 0& 1& 0& 1& 0& 0& 0& 0& 0& 1& 0& 0& 0& 1& 0& 0& 0& 0 \\
   0& 0& 0& 0& \dots 0& 0& 1& 0& 0& 0& 0& 0& 0& 0& 0& 0& 0& 0& 1& 0& 1& 0& 0& 0& 0& 0 \\
   0& 0& 0& 0& \dots 0& 0& 0& 1& 0& 0& 0& 0& 0& 0& 0& 0& 0& 0& 0& 1& 0& 1& 0& 0& 0& 0 \\
   0& 0& 0& 0& \dots 0& 0& 1& 0& 0& 0& 1& 0& 1& 0& 0& 0& 0& 0& 0& 0& 0& 0& 0& 0& 0& 0 \\
   0& 0& 0& 0& \dots 0& 0& 0& 1& 0& 0& 0& 1& 0& 1& 0& 0& 0& 0& 0& 0& 0& 0& 0& 0& 0& 0 \\
   0& 0& 0& 0& \dots 0& 0& 1& 0& 0& 0& 1& 0& 0& 0& 1& 0& 0& 0& 0& 0& 1& 0& 1& 0& 0& 0 \\
   0& 0& 0& 0& \dots 0& 0& 0& 1& 0& 0& 0& 1& 0& 0& 0& 1& 0& 0& 0& 0& 0& 1& 0& 1& 0& 0 \\
   0& 0& 0& 0& \dots 0& 0& 1& 0& 0& 0& 1& 0& 0& 0& 0& 0& 1& 0& 0& 0& 1& 0& 0& 0& 1& 0 \\
   0& 0& 0& 0& \dots 0& 0& 0& 1& 0& 0& 0& 1& 0& 0& 0& 0& 0& 1& 0& 0& 0& 1& 0& 0& 0& 1 \\
   0& 0& 0& 0& \dots 1& 0& 1& 0& 1& 0& 1& 0& 0& 0& 0& 0& 0& 0& 1& 0& 1& 0& 0& 0& 0& 0 \\
   0& 0& 0& 0& \dots 0& 1& 0& 1& 0& 1& 0& 1& 0& 0& 0& 0& 0& 0& 0& 1& 0& 1& 0& 0& 0& 0

		\end{array}
	\end{pmatrix}
\end{equation}

A set of generators for the stabilizer group $S$ is given by the row of the matrix

\begin{equation}
	\begin{pmatrix}
		\begin{array}{cccccccccccccccccc}
  X& I& Z& I& Z& I& I& I& I& I& I& I& Z& I& Z& I& X& I \\  I& X& I& Z& I& Z& I& I& I& I& I& I& I& Z& I& Z& I& X \\
   I& I& Y& I& I& I& Z& I& I& I& I& I& Z& I& I& I& Y& I \\  I& I& I& Y& I& I& I& Z& I& I& I& I& I& Z& I& I& I& Y \\
   Z& I& Z& I& X& I& I& I& Z& I& I& I& Z& I& I& I& X& I \\  I& Z& I& Z& I& X& I& I& I& Z& I& I& I& Z& I& I& I& X \\
   I& I& I& I& I& I& X& I& I& I& Z& I& Z& I& I& I& X& I \\  I& I& I& I& I& I& I& X& I& I& I& Z& I& Z& I& I& I& X \\
   I& I& Z& I& Z& I& I& I& X& I& I& I& I& I& I& I& X& I \\  I& I& I& Z& I& Z& I& I& I& X& I& I& I& I& I& I& I& X \\
   I& I& Z& I& I& I& Z& I& I& I& X& I& Z& I& Z& I& X& I \\  I& I& I& Z& I& I& I& Z& I& I& I& X& I& Z& I& Z& I& X \\
   I& I& Z& I& I& I& I& I& Z& I& I& I& Y& I& I& I& Y& I \\  I& I& I& Z& I& I& I& I& I& Z& I& I& I& Y& I& I& I& Y \\
   Z& I& Z& I& I& I& I& I& I& I& Z& I& Z& I& X& I& X& I \\  I& Z& I& Z& I& I& I& I& I& I& I& Z& I& Z& I& X& I& X

\end{array}
	\end{pmatrix}.
\end{equation}

The base code of this interleaved code is a $[[9,1]]$ code, capable of correcting burst errors of length at most two. The generator matrix of the base code is

\begin{equation} \label{eqn11}
	\mathrm G_{[[9,1]]}=	\begin{pmatrix}
		\begin{array}{ccccccccc|ccccccccc}

			1 & 0 & 0 & 0 & 0 & 0 & 0 & 0 & 1 & 0 & 1 & 1 & 0 & 0 & 0 & 1 & 1 & 0\\
			0 & 1 & 0 & 0 & 0 & 0 & 0 & 0 & 1 & 0 & 1 & 0 & 1 & 0 & 0 & 1 & 0 & 1\\
			0 & 0 & 1 & 0 & 0 & 0 & 0 & 0 & 1 & 1 & 1 & 0 & 0 & 1 & 0 & 1 & 0 & 0\\
            0 & 0 & 0 & 1 & 0 & 0 & 0 & 0 & 1 & 0 & 0 & 0 & 0 & 0 & 1 & 1 & 0 & 0\\
            0 & 0 & 0 & 0 & 1 & 0 & 0 & 0 & 1 & 0 & 1 & 1 & 0 & 0 & 0 & 0 & 0 & 0\\
			0 & 0 & 0 & 0 & 0 & 1 & 0 & 0 & 1 & 0 & 1 & 0 & 1 & 0 & 0 & 1 & 1 & 0\\
			0 & 0 & 0 & 0 & 0 & 0 & 1 & 0 & 1 & 0 & 1 & 0 & 0 & 1 & 0 & 1 & 0 & 1\\
            0 & 0 & 0 & 0 & 0 & 0 & 0 & 1 & 1 & 1 & 1 & 0 & 0 & 0 & 1 & 1 & 0 & 0

		\end{array}
	\end{pmatrix}.
\end{equation}

A set of generators for the stabilizer group given by this matrix is given by the rows of the matrix

\begin{equation}
	\begin{pmatrix}
		\begin{array}{ccccccccc}

   X& Z& Z& I& I& I& Z& Z& X \\ 
   I& Y& I& Z& I& I& Z& I& Y \\
   Z& Z& X& I& Z& I& Z& I& X \\
   I& I& I& X& I& Z& Z& I& X \\
   I& Z& Z& I& X& I& I& I& X \\
   I& Z& I& Z& I& X& Z& Z& X \\
   I& Z& I& I& Z& I& Y& I& Y \\
   Z& Z& I& I& I& Z& Z& X& X \\

		\end{array}
	\end{pmatrix}.
\end{equation}

Observe that by deleting every even column ($k=2$) from the first matrix, one obtains the second matrix.

This allows us to partition the syndromes into sub-syndromes. 

Suppose we measure the syndrome

    $$\mathrm {
  s=( + + - - - + - + + + - - - + + +).
    }
    $$

We can partition this into $k=2$ sub-syndromes:

 $$s_1=( 
    \mathrm {+ - - - + - - +}), \ \ s_2=(\mathrm{ + - + + + - + +
    })
    $$

Let us assume that we have encoded the information in the state $\ket{\psi}$ and that some Pauli error $E$ has occurred and we have calculated the syndrome on the state $E\ket{\psi}$. Assuming the maximum burst length of errors in $E$ has length $\ell=4$ we will demonstrate how to correct $E$ and recover $\ket{\psi}$.

Consider the first subsyndrome:

$$s_1=( 
    \mathrm {+ - - - + - - +})
    $$
    
We first check if we have the sub-syndrome in the lookup table, Table~\ref{cis2}, and the answer is no. Then starting from $p=1$ we check the positions $p-2 \ mod \ 9$ and $p+2 \ mod \ 9$. We see that the first position where both positions are a $-$ is when $p=4$, with position number 2 and 6 being $-$. So we will start building our error with an $X$ in the $4th$ position:

 $$Error:\mathrm {I I I X I IIII}$$

 We change the subsyndrome accordingly:

 $$ s_1=( 
    \mathrm {+ + - - + + - +})
    $$
We continue and observe that the same occurs for position $p=5$. Thus, we update the error accordingly.

 $$Error:\mathrm {I I I X X IIII}$$

 And change the subsyndrome accordingly:

 $$s_1=(  
    \mathrm {+ + + - + + + +})$$

We continue to check the positions $p-2 \ mod \ 9$ and $p+2 \ mod \ 9$ and conclude that there are no further values of $p$ for which both positions are a $-$.

We then seek to identify the $Z$ errors. There is just the $p=4$ position which is a $-$. Since we already have a $X$ error in position $4$, we change that into a $Y$. Hence, the error is

$$E_1=\mathrm {I I I Y X IIII}.$$

We repeat the same part of the algorithm with the second sub-syndrome

$$s_2=( 
    \mathrm { + - + + + - + +
    })
    $$
and obtain

$$E_2=\mathrm {IIIXIIIII  }.$$

Interleaving the errors $E_1$ and $E_2$ we obtain the full error
$$E=\mathrm {   I I I I I I Y X X I I I I I I I I I}.
$$

\end{example}

\section{Simulations}

We performed some simulations to compare our codes to standard codes and a code from \cite{FHCCL2018}. We followed \cite{FHCCL2018} and used a Markovian correlated depolarizing quantum channel represented in Figure \ref{fig:markov}. The $0$ state represents a non-error occurrence in the channel and the $1$ state represents an error. For a length $n$ word, $p$ is probability the first qubit will have an error and the Markovian state will be 1 and $1-p$ is the probability the first qubit doesn't have an error and the Markovian state will be 0. For the next $n-1$ qubits, apply $n-1$ steps in the Markovian channel where $p_{00}$ is the probability of getting no error after the last qubit had no error, $p_{11}$ is the probability of getting an error after the last qubit was an error and $p_{10}$ and $p_{01}$ are the probabilities of getting an error qubit after a no-error qubit and viceversa.

Suppose $p_{00} = (1-\mu)(1-p)+\mu$, $p_{11} = (1-\mu)p+\mu$, $p_{01} = (1-\mu)(1-p)$ and $p_{10} = p(1-\mu)$ where $p$ is the probability of error and each error ($X,Y,Z$) has the same probability and $\mu$ is the correlation degree from $0$ to $1$.

\begin{figure}
    \centering
    \caption{Markovian correlated depolarizing quantum channel}
    \label{fig:markov}
    
    \begin{center}
	\begin{tikzpicture}[->, >=stealth', auto, semithick, node distance=3cm]
	\tikzstyle{every state}=[fill=white,draw=black,thick,text=black,scale=1]
	\node[state]    (A)                     {$0$};
	\node[state]    (B)[right of=A]   {$1$};
;
	\path
	(A) edge[loop left]			node{$p_{00}$}	(A)
	(B) edge[bend left,below]	node{$p_{01}$}	(A)
	(A) edge[bend left,above]	node{$p_{10}$}	(B)
	(B) edge[loop right]			node{$p_{11}$}	(B);
	\end{tikzpicture}
\end{center}
    
\end{figure}
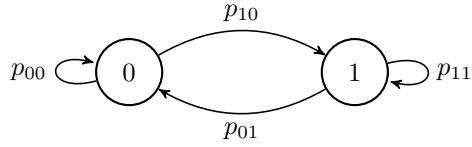	

We used the $[[35,7]]$ code obtained from our construction, that can correct burst errors of length $7$ and random errors of weight $3$ and compared it to the $[[35,7]]$  code constructed in \cite{FHCCL2018} that can correct bursts of length $6$ and $3$ random errors.

\begin{figure}
    \centering
    \includegraphics[width=0.75\textwidth]{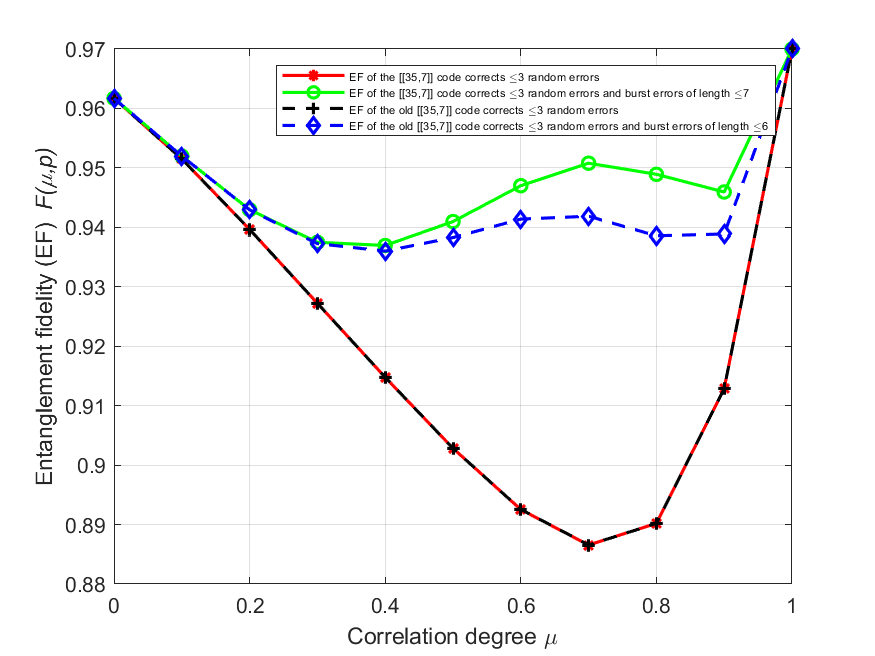}
    \caption{Entanglement fidelity of the 2 codes with respect to the correlation degree $0 \leq \mu \leq 1$. The error probability is $p=3 \times 10^{-2}$.}
    \label{fig:ef1}
\end{figure}

\begin{figure}
    \centering
    \includegraphics[width=0.75\textwidth]{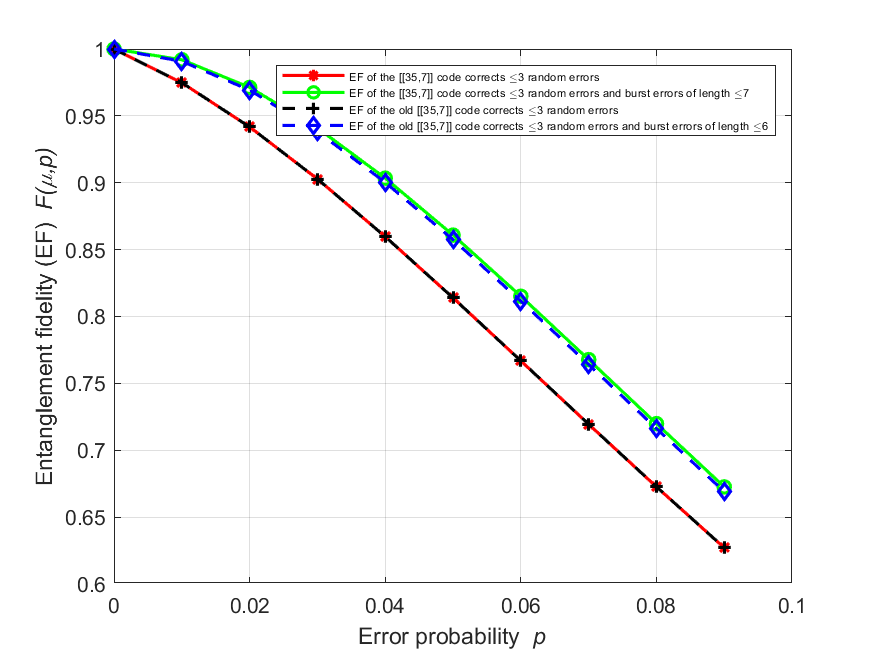}
    \caption{Entanglement fidelity of the 2 codes with respect to the error probability $p=10^{-3} \leq p \leq 10^{-1}$, where $\mu=0.5$.}
    \label{fig:ef2}
\end{figure}
\begin{figure}
    \centering
    \includegraphics[width=0.75\textwidth]{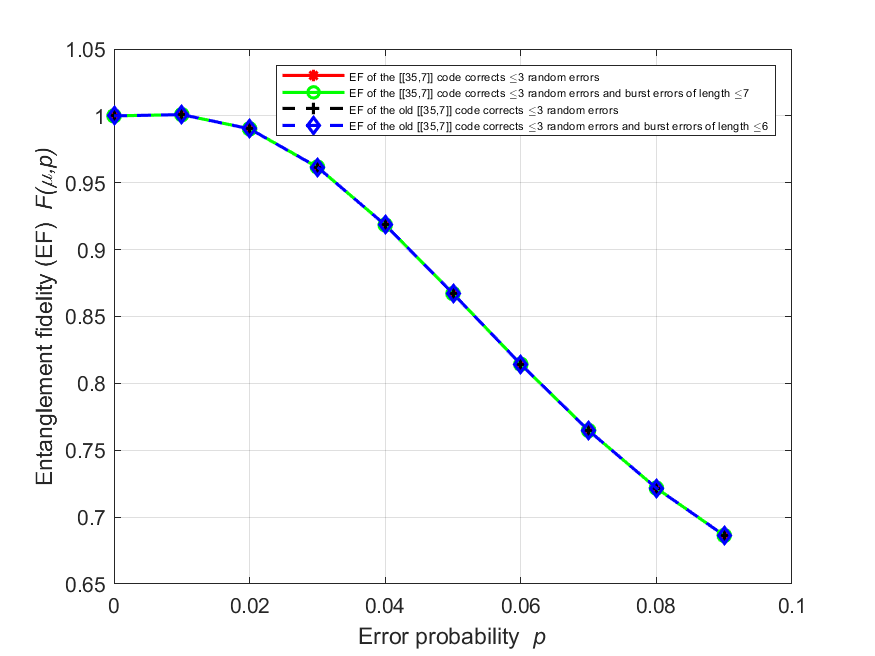}
    \caption{Entanglement fidelity of the 2 codes with respect to the error probability $10^{-3} \leq p \leq 10^{-1}$ where $\mu=0$.}
    \label{fig:ef3}
\end{figure}

In Figure \ref{fig:ef1} is clear that the entanglement fidelity is improved when using the quantum burst error-correcting codes.

With Figure \ref{fig:ef2} and \ref{fig:ef3} we can see that when the correlation is $0$, the code performs the same as a standard code, but when the correlation is $0.5$, there is a constant distance between the entanglement fidelity curves.

\section{Conclusion}
In conclusion, in this article we have introduced a method to construct quantum CRC codes from classical CRC codes. We note that this method can also be used with any classical code to construct a quantum stabilizer code. The classical CRC codes which obtain the Reiger bound, will give quantum CRC codes which attain the quantum Reiger bound. Since there are classical CRC codes which obtain the Reiger bound for all length $n$ and dimension $k$, we conclude that the quantum Reiger bound is always attainable.

We also presented a fast decoding algorithm for a family of quantum CRC codes with parameters $[[mk, k]]$, where $m \geq 5$.

\begin{table} [h]
\begin{footnotesize}
\begin{center}
\begin{tabular}{|c|c|l|}
\hline
$n$ & $k$ & $g(X)$ with the c-property \\ \hline
7   &    4   &    $X^3+X+1$ \\
7   &    4   &    $X^3+X^2+1$ \\
7   &    3   &    $X^4+X^3+X^2+1$ \\
7   &    3   &    $X^4+X^2+X+1$ \\
9   &    3   &    $X^6+X^3+1$ \\
9   &    2   &    $X^7+X^6+X^4+X^3+X+1$ \\
15   &    10   &    $X^5+X^4+X^2+1$ \\
15   &    10   &    $X^5+X^3+X+1$ \\
15   &    9   &    $X^6+X^5+X^4+X^3+1$ \\
15   &    9   &    $X^6+X^3+X^2+X+1$ \\
15   &    7   &    $X^8+X^7+X^6+X^4+1$ \\
15   &    7   &    $X^8+X^4+X^2+X+1$ \\
15   &    8   &    $X^7+X^3+X+1$ \\
15   &    8   &    $X^7+X^6+X^4+1$ \\
15   &    6   &    $X^9+X^6+X^5+X^4+X+1$ \\
15   &    6   &    $X^9+X^8+X^5+X^4+X^3+1$ \\
15   &    5   &    $X^{10}+X^5+1$ \\
15   &    5   &    $X^{10}+X^8+X^5+X^4+X^2+X+1$ \\
15   &    5   &    $X^{10}+X^9+X^8+X^6+X^5+X^2+1$ \\
15   &    3   &    $X^{12}+X^9+X^6+X^3+1$ \\
15   &    4   &    $X^{11}+X^{10}+X^6+X^5+X+1$ \\
15   &    4   &    $X^{11}+X^{10}+X^9+X^8+X^6+X^4+X^3+1$ \\
15   &    4   &    $X^{11}+X^8+X^7+X^5+X^3+X^2+X+1$ \\
15   &    2   &    $X^{13}+X^{12}+X^{10}+X^9+X^7+X^6+X^4+X^3+X+1$ \\
21   &    14   &    $X^7+X^6+X^5+X^4+X^3+1$ \\
21   &    14   &    $X^7+X^4+X^3+X^2+X+1$ \\
21   &    12   &    $X^9+X^8+X^5+X^4+X^2+X+1$ \\
21   &    12   &    $X^9+X^8+X^7+X^5+X^4+X+1$ \\
21   &    12   &    $X^9+X^7+X^6+X^5+X^3+X^2+X+1$ \\
21   &    12   &    $X^9+X^8+X^7+X^6+X^4+X^3+X^2+1$ \\
21   &    10   &    $X^{11}+X^{10}+X^9+X^8+X^7+X^6+X^2+X+1$ \\

\hline
\end{tabular}
\caption{List of $g(X)\not\in \{X+1,(X^n+1)/(X+1)\}$ which have the c-property.} \label{cproptable1}
\end{center}
\end{footnotesize}
\end{table}

\begin{table} [h]
\begin{footnotesize}
\begin{center}
\begin{tabular}{|c|c|l|}
\hline
21   &    10   &    $X^{11}+X^{10}+X^9+X^5+X^4+X^3+X^2+X+1$ \\
21   &    7   &    $X^{14}+X^7+1$ \\
21   &    9   &    $X^{12}+X^{11}+X^8+X^6+X^3+X^2+1$ \\
21   &    9   &    $X^{12}+X^{10}+X^9+X^6+X^4+X+1$ \\
21   &    6   &    $X^{15}+X^{14}+X^{13}+X^{12}+X^{10}+X^9+X^8+X^5+X^4+X^2+1$ \\
21   &    6   &    $X^{15}+X^{13}+X^{11}+X^{10}+X^7+X^6+X^5+X^3+X^2+X+1$ \\
21   &    9   &    $X^{12}+X^6+X^3+1$ \\
21   &    9   &    $X^{12}+X^9+X^6+1$ \\
21   &    6   &    $X^{15}+X^{14}+X^8+X^7+X+1$ \\
21   &    8   &    $X^{13}+X^{11}+X^9+X^8+X^7+X^6+X^4+X^2+X+1$ \\
21   &    8   &    $X^{13}+X^{12}+X^{11}+X^9+X^7+X^6+X^5+X^4+X^2+1$ \\
21   &    5   &    $X^{16}+X^{12}+X^{11}+X^8+X^6+X^4+X^3+X^2+X+1$ \\
21   &    5   &    $X^{16}+X^{15}+X^{14}+X^{13}+X^{12}+X^{10}+X^8+X^5+X^4+1$ \\
21   &    7   &    $X^{14}+X^{11}+X^{10}+X^9+X^7+X^6+X^5+X+1$ \\
21   &    7   &    $X^{14}+X^{13}+X^9+X^8+X^7+X^5+X^4+X^3+1$ \\
21   &    4   &    $X^{17}+X^{15}+X^{14}+X^{10}+X^8+X^7+X^3+X+1$ \\
21   &    4   &    $X^{17}+X^{16}+X^{14}+X^{10}+X^9+X^7+X^3+X^2+1$ \\
21   &    3   &    $X^{18}+X^{15}+X^{12}+X^9+X^6+X^3+1$ \\
21   &    6   &    $X^{15}+X^{14}+X^{12}+X^9+X^8+X^5+X^2+1$ \\
21   &    6   &    $X^{15}+X^{13}+X^{10}+X^7+X^6+X^3+X+1$ \\
21   &    3   &    $X^{18}+X^{17}+X^{16}+X^{14}+X^{11}+X^{10}+X^9+X^7+X^4+X^3+X^2+1$ \\
21   &    3   &    $X^{18}+X^{16}+X^{15}+X^{14}+X^{11}+X^9+X^8+X^7+X^4+X^2+X+1$ \\
21   &    2   &    $X^{19}+X^{18}+X^{16}+X^{15}+X^{13}+X^{12}+X^{10}+X^9+X^7+X^6+X^4+X^3+X+1$ \\ 
23   &    12   &    $X^{11}+X^9+X^7+X^6+X^5+X+1$ \\
23   &    12   &    $X^{11}+X^{10}+X^6+X^5+X^4+X^2+1$ \\
25   &    5   &    $X^{20}+X^{15}+X^{10}+X^5+1$ \\
25   &    4   &    $X^{21}+X^{20}+X^{16}+X^{15}+X^{11}+X^{10}+X^6+X^5+X+1$ \\
27   &    9   &    $X^{18}+X^9+1$ \\
27   &    8   &    $X^{19}+X^{18}+X^{10}+X^9+X+1$ \\
27   &    3   &    $X^{24}+X^{21}+X^{18}+X^{15}+X^{12}+X^9+X^6+X^3+1$ \\
27   &    2   &    $X^{25}+X^{24}+X^{22}+X^{21}+X^{19}+X^{18}+X^{16}+X^{15}+X^{13}+X^{12}+X^{10}+X^9+X^7+X^6+X^4+X^3+X+1$ \\

 \hline
\end{tabular}
\caption{List of $g(X)\not\in \{X+1,(X^n+1)/(X+1)\}$ which have the c-property.} 
\label{cproptable2}
\end{center}
\end{footnotesize}
\end{table}

\end{document}